\newtheorem{proposition}{Proposition}
\newcommand{\Real}{\mathds{R}}
\newcommand{\dif}{\mathrm{d}}
\newcommand{\ind}{\mathds{1}}
\DeclareMathOperator*{\argmin}{arg\,min}
\DeclareMathOperator*{\E}{\mathds{E}}
\DeclareMathOperator*{\spn}{\operatorname{span}}
\numberwithin{equation}{section}
\title{Multi-period Trading Prediction Markets with Connections to Machine Learning}
\author{Jinli Hu}
\author{Amos Storkey}
\affil{Institute for Adaptive and Neural Computation \\ School of Informatics, The University of Edinburgh \\ 10 Crichton Street, Edinburgh, EH8 9AB \\ \texttt{\{J.HU, A.STORKEY\}@ED.AC.UK}}
\date{March 2014}
\begin{document}

\maketitle

\begin{abstract}
    We present a new model for prediction markets, in which we use risk measures to model agents and introduce a market maker to describe the trading process. This specific choice on modelling tools brings us mathematical convenience. The analysis shows that the whole market effectively approaches a global objective, despite that the market is designed such that each agent only cares about its own goal. Additionally, the market dynamics provides a sensible algorithm for optimising the global objective. An intimate connection between machine learning and our markets is thus established, such that we could
    \begin{inparaenum}[1)]
        \item analyse a market by applying machine learning methods to the global objective, and
        \item solve machine learning problems by setting up and running certain markets.
    \end{inparaenum}
\end{abstract}

\section{Introduction}
Following the mainstream interest in ``big data'', one valuable direction of machine learning is towards to building up distributed, scalable and self-incentivised systems which could organise for solving large scale problems. Recently, prediction markets \citep{wolfers2004} show the promise of being the abstract framework for machine learners to design these systems. As one type of markets, prediction markets naturally introduce the concepts such as self-incentivised computation and distributed environment. Additionally, the close relationship between prediction markets and probabilities shed light on a new way of achieving probabilistic modelling \citep{storkey2011}.

Since \citet{pennock1996toward}, researchers have spent decades on building connections between machine learning and prediction markets. However, this problem has still not been well solved. One reason is that the framework of prediction markets is somehow too flexible, and in order to analyse the markets for machine learning goals one has to first specify a market model to describe the prediction markets. The other reason is that given a market model, we may still not know what the market is doing, even if we understand agent behaviours and market mechanisms. As distinct from most machine learning methods which explicitly define and optimise certain objectives, markets only introduce local objectives to each individual agent. To interpret a market as a machine learning method, we have to find the global objective that the market aims to optimise. This idea motivates our work.

Instead of just focusing on market mechanisms \citep{chen2010}, we would like to incorporate the agents and analyse our market as a whole. This setting is similar to \citep{storkey2011,penna2012,barbu2012}; but unlike \citet{barbu2012}, we will build a model on agent behaviours; and unlike \citet{storkey2011} and \citet{penna2012}, we model agents using risk measures, which makes our markets analytical.

The novel results of this paper are:
\begin{itemize}
  \item establishing a new prediction market model which both inherits the strengths of prediction markets and has mathematical convenience;
  \item giving explicitly the global objective that the market aims to optimise as a whole, and interpreting the market trading process as a sequential optimisation procedure of the global objective;
  \item strengthening the intimate connections between machine learning and markets by showing that the market effectively solves the dual of certain machine learning problems.
\end{itemize}

\section{A General Prediction Market Setup}\label{sec:setup}
Let $\Omega$ be the space of all possible future states. We say a prediction market is built on $\Omega$ if it trades securities associated with the future state $\omega\in\Omega$. Specifically, securities are defined as a set of random variables $\{\xi_k(\cdot)\} = \{\xi_1(\cdot),\xi_2(\cdot)\ldots,\xi_K(\cdot)\}$. Each $\xi_k(\cdot):\Omega\to\Real$ is a payment function, that is, one unit of this security will pay to the holder $\xi_k(\omega)$ if $\omega$ turns out to be the future state. This definition is quite general, and securities defined in this way are also referred to as \emph{complex securities} \citep{abernethy2013}. We require that all securities $\{\xi_k(\cdot)\}$ (collected into the vector $\xiB(\cdot)$) are linearly independent, that is, for $\aB\in\Real^K$, we have $\aB\cdot\xiB(\cdot) = 0$ only if $\aB = \zeroB$. If they are not, then we can always pick a subset $\{\xi_{k'}(\cdot)\}$ of linearly independent securities from $\{\xi_k(\cdot)\}$ such that all the other securities in $\{\xi_k(\cdot)\}$ can be represented by the linear combination of $\{\xi_{k'}(\cdot)\}$ \citep{kreyszig2007}. Therefore it is redundant to consider $\{\xi_k(\cdot)\}$ that are not linearly independent. As an example, the Arrow-Debreu security is a special case of complex securities. When the sample space $\Omega$ is discrete and contains only finite number of states, Arrow-Debreu securities are a set of $K=|\Omega|$ securities, in which the $k$-th one pays one unit if the $k$-th state is true: $\xi_k(\omega) = \ind(\omega=k)$. Note that in general cases $K<|\Omega|$, e.g.\ when the value of $\omega$ is continuous, there will be infinite number of states but we always have a finite $K$ for practice.

Agents can only trade these predefined securities. The behaviour of an agent is characterised by its \emph{portfolio} $\{w,s_k\} = \{w,s_1,s_2,\ldots,s_K\}$, where $w$ is the amount of money that the agent has, and $s_k$ is the amount of shares the agent holds in security $k$. We collect all $s_k$ into vector $\sB$. If an agent has a portfolio $\{w,s_k\}$, the total payment of the securities is
\begin{equation}\label{eq:riksypayment}
    X(\cdot) = \sB\cdot\xiB(\cdot),
\end{equation}
where $X(\cdot):\Omega\to\Real$ is in essence a random variable on $\Omega$. We call $X(\cdot)$ the \emph{risky asset} because of its uncertainty and $w$ the \emph{risk-free asset}. The gross payment is thus
\begin{equation}
    \hat{X}(\cdot) = w + \sB\cdot\xiB(\cdot) = w + X(\cdot),
\end{equation}
which is also a random variable. We call $\hat{X}(\cdot)$ the \emph{(gross) asset}. Denote $\mathcal{X}$ the set of all $X(\cdot)$ that are accessible for an agent, and similarly $\mathcal{\hat{X}}$ the set of all $\hat{X}(\cdot)$. Notice that since $\{\xi_k(\cdot)\}$ are linearly independent, there exists a unique map (bijection) between $X(\cdot)$ and $\sB$ via (\ref{eq:riksypayment}). Therefore a portfolio could also be represented by $\{w,X(\cdot)\}$. In our setting $\mathcal{X}\subseteq\spn(\xi_1(\cdot),\xi_2(\cdot)\ldots,\xi_K(\cdot))$, but it is possible to make $\mathcal{X}$ more abstract, which is not a space spanned by a prefixed number of securities but allows new security types to be added on the fly. This discussion is beyond our scope.

A market consists of two processes, that
\begin{inparaenum}[1)]
    \item each agent chooses a portfolio $\{w,X(\cdot)\}$ it would like to hold, and
    \item agents try to move to their preferred portfolio by trading.
\end{inparaenum}
To describe the decision making process we need a model of portfolio selection, while to describe trading we need to specify a market mechanism. These two parts will be discussed in Section \ref{sec:preferences} and \ref{sec:trading}.

Later in this paper, when the context is clear we will omit parentheses and write a random variable in an uppercase letter, e.g.\ $X$ (except the securities, which are denoted by $\xi$), and use the lowercase of the same letter for the value of it, e.g.\ $x$. We will also write functionals in letters without any parentheses.

\section{Preferences on Assets}\label{sec:preferences}
Agents select assets based on their preferences. An agent's preference order of two assets is measured by a functional $f:\mathcal{\hat{X}}\to\Real$, such that the agent prefers one asset $\hat{X}$ than the other asset $\hat{Y}$ if and only if $f(\hat{X})>f(\hat{Y})$, and that the agent is indifferent between $X$ and $Y$ if and only if $f(\hat{X})=f(\hat{Y})$. There are plenty of theories on choosing and analysing a specific form of $f$. These includes expected utility theory (EUT) \citep{vonneumann2007}, dual utility theory \citep{yaari1987}, risk measures \citep{artzner1999}, etc.\ EUT is perhaps the most popular theories in economics and game theory, while risk measures are commonly seen in finance literature. We choose to use risk measures to model agent behaviours. We introduce risk measures in this section, while putting the detailed justification of using risk measures and its relation to EUT in Section \ref{sec:relatedwork}.

\subsection{Risk measures}
As is indicated by their name, risk measures assign higher scores to assets that are more ``risky''. They can also be understood as measures of the potential loss of choosing certain asset. A \emph{(monetary) risk measure} is defined as a functional $\rho:\mathcal{X}\to\Real$ such that $\rho(0)$ is finite and $\rho$ satisfies the following conditions \citep{artzner1999}:
\begin{description}
    \item[Translation invariance] If $X\in\mathcal{X}$ and $m\in\Real$, then
        \begin{equation}\label{eq:translation}
            \rho(X + m) = \rho(X) - m.
        \end{equation}
    \item[Monotonicity] If $X,Y\in\mathcal{X}$ and $X\leq Y$, then
        \begin{equation}
            \rho(X)\geq\rho(Y).
        \end{equation}
\end{description}
Here $X\leq Y$ should be understood as $P(x\leq y) = 1$, that is, with the probability of one that $X$ will generate a lower return than $Y$. Thus monotonicity indicates that an asset with a better return deserves a lower risk. Due to translation invariance, a risk measure maps any risk-free asset to itself, and is additive w.r.t.\ any risk-free asset. Therefore, the output of a risk measure has the same unit with a risk-free asset, and can be calculated like an asset.

The domains of risk measures and the preference functional $f$ are different, as risk measures are defined on $\mathcal{X}$ while the space of assets that agent can hold is $\mathcal{\hat{X}}$. Fortunately, we could easily extend the definition to the domain $\mathcal{\hat{X}}$ by applying translation invariance (\ref{eq:translation})
\begin{equation}
    \rho(\hat{X}) = \rho(X + w) = \rho(X) - w, \quad \forall \hat{X}\in\mathcal{\hat{X}}.
\end{equation}
A corresponding $f$ can thus be obtained by $f = -\rho$.

Risk measures are very generic. In our discussion we will use both risk measures and a specific class of them, the \emph{convex} risk measures \citep{follmer2002}. A risk measure is convex if $\forall X_1,X_2\in\mathcal{X}$ and $\lambda\in[0,1]$
\begin{equation}\label{eq:convexity}
    \rho(\lambda X_1 + (1-\lambda)X_2) \leq \lambda\rho(X_1) + (1-\lambda)\rho(X_2).
\end{equation}
It says that the risk of a combination of two assets should not be higher than holding them separately. In other words, convex risk measures encourage diversification, which is a natural condition on risk measures.

\subsubsection{Examples of risk measures}
A famous non-convex risk measure is the \emph{Value at Risk} (VaR) \citep{linsmeier2000}, which outputs a threshold loss $l$ such that the probability of $-X$ exceeding $l$ is smaller than a predefined level
\begin{equation}
    \mathrm{VaR}_\alpha(X) \equiv \inf\{l\in\Real\mid P(-X>l)\leq 1-\alpha\}.
\end{equation}
A famous convex risk measure is the \emph{Entropic risk measure} \citep{follmer2004}
\begin{equation}\label{eq:entropic}
    \rho_E = \frac{1}{\theta}\log M_X(-\theta) = \sup_{Q\in\mathcal{P}}\E\nolimits_Q[-X] - \frac{1}{\theta}D[Q||P].
\end{equation}
Here $M_X(t)\equiv \E_P[e^{tX}]$ is the moment-generating function, and $D[\cdot||\cdot]$ is the KL-divergence (and this is where ``entropic'' comes in). We mention that the second representation of $\rho_E$ holds for all convex risk measures, and this representation becomes the key to connecting the markets to machine learning (cf.\ Section \ref{sec:objective}).

\subsection{Rational Choices}
Recall that a portfolio that leads to a higher value of $f(\hat{X})$ is preferred. Thus the favourite portfolio of an agent should be the one that maximises $f$, which we denote by $\{w,X\}^{opt}$. The behaviour of choosing $\{w,X\}^{opt}$ is called the \emph{rational choice}, and an agent is rational if it always chooses $\{w,X\}^{opt}$ as its trading goal. Since in our framework $f = -\rho$, a rational agent will choose will $\{w,X\}^{opt}$ under the rule of
\begin{equation}\label{eq:agentgoal}
    \min_{\{w,X\}} \rho(\hat{X}) = \min_{\{w,X\}} \rho(w + X).
\end{equation}

In a market an agent only cares about its own goal (\ref{eq:agentgoal}). It seems like this property prevents us from linking markets to machine learning methods, as the latter always aim to achieve certain global objectives. However, with a careful design, we can let our markets implicitly define global objectives and make an agent contributes to the global objective at the same time as it achieves its own goal.

\section{Multi-period Trading Markets}\label{sec:trading}
In this section we will build our market, a multi-period trading market whose trades are driven by a market maker. ``Multi-period'' is used to indicate that the prices of the securities are allowed to vary at different time steps, and that agents can trade with the market maker in multiple times \citep{follmer2004}. The market maker is introduced to simplify the market mechanism and to make the market run efficiently.

It is difficult to characterise the trading process in the markets with unspecified mechanisms, and those markets may not run efficiently. For example, there may not exist a consistent agreement among agents on how much should be paid to buy/sell one share of a security. Moreover, one agent who wants to sell a certain amount of shares may not find any buyers \citep{chen2007}. One way to simplify the trading process is by introducing a market maker \citep{hanson2007}. A market maker is a special agent. It is a price maker, who defines the price for trading each security. All agents are only allowed to trade with the market maker. They can, however, make a trade at any time as long as they agree to pay under the market maker's pricing. The pricing rule of a market maker at time step $t$ is a functional $c_t:\mathcal{X}\to\Real$. At different time steps the cost for purchasing an asset may be different, i.e.\ it may happen that $c_t(X)\neq c_{t'}(X)$ when $t\neq t'$.

Suppose that an agent has a portfolio $\{w_{t-1},X_{t-1}\}$ at time $t-1$ and it would like to buy $\Delta X_t$ from the market maker at $t$. The agent cannot propose an arbitrary price $\Delta w_t$ for $\Delta X_t$ but has to accept the price provided by the market maker $\Delta w_t = -c_t(\Delta X_t)$. The updated portfolio is thus restricted to $\{w_{t-1}-c_t(\Delta X_t),X_{t-1}+\Delta X_t\}$, and the updated asset is restricted to $\hat{X}_t = X_{t-1}+w_{t-1}+\Delta X_t-c_t(\Delta X_t)$. Now a rational agent only cares about choosing its optimal purchase amount $\Delta X_t$ such that $\rho(\hat{X}_t)$ is minimised:
\begin{equation}\label{eq:singlemin}
    \min_{\{w_t,X_t\}} \rho(\hat{X}_t) = \min_{\Delta X_t\in\mathcal{X}} \rho(X_{t-1}+\Delta X_t+w_{t-1}-c_t(\Delta X_t)).
\end{equation}
This portfolio selection procedure leads to Algorithm \ref{alg:select}.

\begin{algorithm}
    \caption{$\text{\bfseries Select}(\{w_{t-1},X_{t-1}\},\rho(\cdot),c_t(\cdot))$: portfolio selection of a rational agent}
    \label{alg:select}
    \begin{algorithmic}
        \REQUIRE portfolio $\{w_{t-1},X_{t-1}\}$, risk measure $\rho(\cdot)$, pricing rule $c_t(\cdot)$
        \STATE compute $\Delta X_t = \argmin_{\Delta X'_t\in\mathcal{X}} \rho(X_{t-1}+\Delta X'_t+w_{t-1}-c_t(\Delta X'_t))$ using (\ref{eq:singlemin})
        \ENSURE $\{\Delta X_t,-c_t(\Delta X_t)\}$
    \end{algorithmic}
\end{algorithm}

We now consider a multi-period market which involves a set $A = \{1,2,\ldots,N\}$ of agents and a market maker. Assume that at each round $t$ there is only one agent $a_t\in A$ that trades with the market maker. This assumption indicates that each agent trades with the market maker separately, and they do not cooperate to make a joint purchase. $\{a_1,a_2,\ldots,a_T\}$ is thus the trading queue of the market. Since there are multiple agents, we use an extra subscript to distinguish the portfolios of different agents. For example, an agent $n\in A$'s portfolio at time $t$ is $\{w_{n,t}, X_{n,t}\}$. The initial values are denoted with the subscript $t=0$. We collect all $w_{n,t},X_{n,t},\hat{X}_{n,t}$ into vectors $\wB_t,\XB_t$ and $\hat{\XB}_t$, respectively. We assume that agents do not bring in any risky asset at the beginning, which is a natural assumption since only the market maker can issue securities. This assumption means we have $\XB_0 = \zeroB$ and so $\hat{\XB}_0 = \wB_0$.

At time $t$, only the agent $a_t$ updates its portfolio by trading with the market maker while all the other agents keep the same portfolios as at $t-1$. Suppose the asset that the agent $a_t$ would like to purchase is $\Delta X_{a_t,t}$, then for all $n\in A$
\begin{subequations}\label{eq:update}
    \begin{align}
        X_{n,t} &= X_{n,t-1} + \ind(n=a_t)\Delta X_{a_t,t}, \\
        w_{n,t} &= w_{n,t-1} - \ind(n=a_t)c_t(\Delta X_{a_t,t}).
    \end{align}
\end{subequations}
Algorithm \ref{alg:multiple} runs a multi-period trading market.
\begin{algorithm}
    \caption{A multi-period market with a set $A$ of agents and a market maker}
    \label{alg:multiple}
    \begin{algorithmic}
        \REQUIRE initial portfolios $\{\wB,\XB_0\}$, risk measures $\{\rho_n(\cdot)\}$, pricing rule $c_t(\cdot)$, time period $T$
        \FOR{$t=1$ \TO $T$}
            \FOR{each agent $n\in A$}
                \STATE propose $\{\Delta X_{n,t}, -c_t(\Delta X_t)\} = \text{\bfseries Select}(\{w_{n,t-1},X_{n,t-1}\},\rho_n(\cdot),c_t(\cdot))$ using Algorithm \ref{alg:select}
            \ENDFOR
            \STATE trade happens between the market maker and $a_t$
            \FOR{each agent $n\in A$}
                \STATE update their portfolios using (\ref{eq:update})
            \ENDFOR
        \ENDFOR
    \end{algorithmic}
\end{algorithm}

We can also split Algorithm \ref{alg:multiple} into two routines, in terms of the market maker and each agent, respectively (Algorithm \ref{alg:marketmaker} and \ref{alg:agent}). We say this to emphasise the fact that each agent in the market has its \emph{own objective} (which is to achieve the optimal portfolio based on its unique preferences), plus a communication with the market maker.

\begin{algorithm}
    \caption{The market maker in a multi-period market}
    \label{alg:marketmaker}
    \begin{algorithmic}
        \REQUIRE time period $T$
        \FOR{$t=1$ \TO $T$}
            \STATE publish a pricing rule $c_t(\cdot)$
            \STATE collect trading request $\{\Delta X_{n,t}\}$ from agents
            \STATE choose agent $a_t$ and make trade, $\Delta X_t \equiv \Delta X_{a_t,t}$
            \STATE update $c_t(\cdot) \to c_{t-1}(\cdot)$
        \ENDFOR
        \STATE close the market
        \ENSURE $\{a_t\},\{\Delta X_t\}$
    \end{algorithmic}
\end{algorithm}

\begin{algorithm}
    \caption{An agent $n\in A$ in a multi-period market}
    \label{alg:agent}
    \begin{algorithmic}
        \REQUIRE initial portfolio $\{w_{n,0},X_{n,0}\}$, risk measure $\rho_n(\cdot)$, starting point $t=1$
        \REPEAT
            \STATE receive the pricing rule for time $t$
            \STATE compute $\{\Delta X_{n,t}, -c_t(\Delta X_{n,t})\}= \text{\bfseries Select}(\{w_{n,t-1},X_{n,t-1}\},\rho_n(\cdot),c_t(\cdot))$ and send its trading request to the market maker
            \IF{trade happens}
                \STATE $X_{n,t} = X_{n,t-1} + \Delta X_{n,t}$
                \STATE $w_{n,t} = w_{n,t-1} - c_t(\Delta X_{n,t})$
            \ELSE
                \STATE $X_{n,t} = X_{n,t-1}, w_{n,t} = w_{n,t-1}, \Delta X_{n,t} = 0$
            \ENDIF
            \STATE $t = t + 1$
        \UNTIL{market is closed}
        \ENSURE $\{w_{n,t},X_{n,t}\}_{t=1,2,\ldots}, \{\Delta X_{n,t}\}_{t=1,2,\ldots}$
    \end{algorithmic}
\end{algorithm}

\subsection{Appropriate choice on the pricing rule $c_t(\cdot)$}
There has been plenty of work on studying the pricing rule $c_t(\cdot)$ of the market maker \citep{brahma2012,pennock2004}. A popular class of mechanisms is Hanson's market scoring rules \citep{hanson2007}. It is later formalised by \citet{abernethy2013}, who use a set of reasonable axioms to characterise the pricing mechanism. We apply their result to our framework.

Let $\Delta X_t \equiv \Delta X_{a_t,t}$ be the trade with the market maker at time $t$. Consider two situations:
\begin{inparaenum}[1)]
    \item a trade happens with the market maker in $\Delta X$; and
    \item a trade happens with the market maker in $\Delta X'$ and is \emph{immediately} followed by another trade $\Delta X''$, where $\Delta X = \Delta X' + \Delta X''$.
\end{inparaenum}
A natural requirement is that the cost of purchasing $\Delta X$ should be equal to the total cost of purchasing $\Delta X'$ and $\Delta X''$. If we accept this property, then we can always find a functional $c:\mathcal{X}\to\Real$ such that \citep{abernethy2013}
\begin{equation}\label{eq:pathindi}
    c_t(\Delta X_t) = c(\Delta X_1 + \cdots + \Delta X_{t-1} + \Delta X_t) - c(\Delta X_1 + \cdots + \Delta X_{t-1}).
\end{equation}
We say a pricing rule $c_t$ is \emph{path-independent} if it has the form of (\ref{eq:pathindi}), and reload the notation $c$ to represent $c_t$.

\section{The Machine Learning Objective of the Multi-period Trading Markets}\label{sec:objective}
Remember that the primary goal of this paper is to establish an intimate connection between machine learning and our new prediction markets model. Before we start to analyse the multi-period trading markets, we introduce the machine learning context for which we want our markets to be utilised. Many machine learning tasks could be interpreted under the following generic framework: given a set of data sampled from a space $\Omega$ and a hypothesis space $\mathcal{P}$ which contains a class of accessible probabilities on $\Omega$, we would like to find a probability from $\mathcal{P}$ that can best describe the data. Usually we use a functional $F:\mathcal{P}\to\Real$ to characterise the ``best'' performance, such that the best probability is the one that minimises $F$. Formally, this involves an optimisation problem
\begin{equation}
    \min_{P\in\mathcal{P}}F(P)
\end{equation}
For specific problems in which the information comes from different parts of the data or the models, $F$ has a form of $F = \sum_n F_n$, the sum of a set of functionals which share the same domain $\mathcal{P}$ (see examples in Section \ref{sec:examples} for details). We will show that a multi-period market effectively defines and optimises a machine learning task whose $F(P) = \sum_n F_n(P)$.

The connection is established in two steps: first we show that the market does have a global objective, and then show that under mild conditions the market optimises the dual of a machine learning problem $\min_{P\in\mathcal{P}}\sum_n F_n$.

\subsection{The global objective of a market}
We show that a multi-period trading market minimises a global objective. The optimisation is done sequentially via the market trading dynamics, that is, an agent will contribute to minimising this global objective as long as it makes a trade with the market maker. This argument is formalised in the following
\begin{proposition}[The global objective of a market]\label{thm:objective}
    A multi-period market (Algorithm \ref{alg:multiple}) with a path-independent pricing rule market maker aims to minimise the global objective
    \begin{equation}\label{eq:objective}
        L = c(Y) + \sum_{n\in A}\rho_n(X_n), \quad Y = \sum_{n\in A}X_n,
    \end{equation}
    by performing a sequential optimisation algorithm, which is implemented by the market trading process (cf.\ (\ref{eq:singlemin}) and (\ref{eq:update})):
    \begin{subequations}
        \begin{align}
            \Delta X_t &= \argmin\nolimits_{\Delta X'_t} \rho_{a_t}(X_{a_t,t-1}+\Delta X'_t +w_{a_t,t-1}-c_t(\Delta X'_t)), \label{eq:singleopt} \\
            X_{n,t} &= X_{n,t-1} + \ind(n=a_t)\Delta X_t, \\
            w_{n,t} &= w_{n,t-1} - \ind(n=a_t)c_t(\Delta X_t), \\
            Y_t &= Y_{t-1} + \Delta X_t,
        \end{align}
    \end{subequations}
    If the algorithm converges at time $t'$, i.e.\ $\Delta X_t = 0$ for all $t>t'$, then $\{X_{n,t'}\}, Y_{t'}$ achieves a local minimum of the objective $L$ in (\ref{eq:objective}).
\end{proposition}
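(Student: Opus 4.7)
The plan is to show that the per-step agent optimization (\ref{eq:singleopt}) is literally a block-coordinate descent step on the global objective $L$, with coordinate block $X_{a_t}$. Once that identification is made, both monotone decrease of $L$ along the trading dynamics and local optimality at convergence follow. The two tools I need are the translation invariance (\ref{eq:translation}) of each $\rho_n$ and the path-independence (\ref{eq:pathindi}) of the pricing rule. First, translation invariance lets me rewrite the agent's argument as
\begin{equation*}
\rho_{a_t}(X_{a_t,t-1}+\Delta X_t+w_{a_t,t-1}-c_t(\Delta X_t)) = \rho_{a_t}(X_{a_t,t-1}+\Delta X_t)-w_{a_t,t-1}+c_t(\Delta X_t),
\end{equation*}
so the constant $w_{a_t,t-1}$ drops out of the $\argmin$. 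Path-independence then rewrites $c_t(\Delta X_t) = c(Y_{t-1}+\Delta X_t)-c(Y_{t-1})$, and the second term is constant in $\Delta X_t$; thus the agent's problem reduces to $\min_{\Delta X_t\in\mathcal{X}} \bigl\{\rho_{a_t}(X_{a_t,t-1}+\Delta X_t)+c(Y_{t-1}+\Delta X_t)\bigr\}$.

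Next I would evaluate the global objective $L$ at the tentative post-trade state $(\XB_t,Y_t)$, where only the block $X_{a_t}$ moves by $\Delta X_t$ (with $Y$ updated consistently). This gives
\begin{equation*}
L_t = \rho_{a_t}(X_{a_t,t-1}+\Delta X_t)+c(Y_{t-1}+\Delta X_t)+\sum_{n\neq a_t}\rho_n(X_{n,t-1}),
\end{equation*}
so as a function of $\Delta X_t$, $L_t$ agrees with the simplified agent problem up to an additive constant. Hence the agent's trade is exactly an $\argmin$ of $L$ restricted to the $X_{a_t}$-block at the state $(\XB_{t-1},Y_{t-1})$, which yields $L_t\leq L_{t-1}$ at every step, with equality precisely when $\Delta X_t=0$. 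For the convergence clause, $\Delta X_t=0$ for all $t>t'$ means that for every agent $n$ scheduled to trade after $t'$, $0$ is an $\argmin$ of $\rho_n(X_{n,t'}+\cdot)+c(Y_{t'}+\cdot)$; under the natural assumption that the trading queue revisits every agent, this is coordinate-wise optimality in every block, which is the asserted local minimum. If the $\rho_n$ and $c$ are further convex (cf.\ (\ref{eq:convexity})), coordinate-wise optimality upgrades to a global minimum.

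The main obstacle is interpretive rather than technical: ``local minimum'' must be read in the block-coordinate sense, since coordinate stationarity in non-smooth or non-convex settings does not in general imply joint local optimality without an additional convexity or regularity assumption. A secondary subtlety is that the argument only establishes monotone decrease of $L$ along blocks that are actually visited, so the conclusion is implicitly conditioned on the realized trading queue (or on the mild assumption that every agent is scheduled infinitely often).
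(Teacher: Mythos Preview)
Your proposal is correct and uses the same two ingredients as the paper's proof: translation invariance (\ref{eq:translation}) to strip the wealth term $w_{a_t,t-1}$ from the agent's objective, and path-independence (\ref{eq:pathindi}) to replace $c_t(\Delta X_t)$ by $c(Y_{t-1}+\Delta X_t)$ up to a constant. The paper organises the computation slightly differently: instead of evaluating $L$ directly at the post-trade state, it defines a per-step increment $l_t(\Delta X_t)=\rho_{a_t}(X_{a_t,t-1}+\Delta X_t)-\rho_{a_t}(X_{a_t,t-1})+c_t(\Delta X_t)$, sums over $t$, and then telescopes both the $\rho$-terms (over agents) and the $c_t$-terms (via path-independence) to obtain $\sum_t l_t(\Delta X_t)=c(Y_T)+\sum_n\rho_n(X_{n,T})-c(0)-\sum_n\rho_n(0)$. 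Your block-coordinate-descent framing is more direct and, in fact, gives a cleaner justification of the convergence clause: the paper simply asserts that convergence yields a local minimum, whereas you correctly identify this as coordinate-wise optimality and flag the implicit assumption that every agent is revisited. Both routes are equivalent; yours just avoids the intermediate telescoping and makes the ``sequential optimisation'' interpretation explicit.
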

\begin{proof}
    At time $t$ only agent $a_t$ will trade with the market maker, so $\Delta X_t = \Delta X_{a_t,t}$. At time $t$, for any agent $n$ all quantities calculated before $t$ can be treated as constants as they could no longer be modified. Therefore, the functional that is minimised in (\ref{eq:singleopt}) has the same optimal point with the following functional
    \begin{equation}
         l_t(\Delta X'_t) = \rho_{a_t}(X_{a_t,t-1}+\Delta X'_t+w_{a_t,t-1}-c_t(\Delta X'_t)) - \rho_{a_t}(X_{a_t,t-1}+w_{a_t,t-1}).
    \end{equation}
    Apply the property of translation invariance to $l_t$, we have
    \begin{equation}\label{eq:addconstants}
        l_t(\Delta X'_t) = \rho_{a_t}(X_{a_t,t-1}+\Delta X'_t) - \rho_{a_t}(X_{a_t,t-1}) + c_t(\Delta X'_t).
    \end{equation}
    Sum over all $l_t$'s and denote this summation by $L_T$, which is a functional. Then
    \begin{equation}\label{eq:sumover}
        \min_{\{\Delta X'_t\}} L_T = \min_{\{\Delta X'_t\}}\sum_{t=1}^Tl_t(\Delta X'_t) = \sum_{t=1}^T \min_{\Delta X'_t}l_t(\Delta X'_t) = \sum_{t=1}^T l_t(\Delta X_t).
    \end{equation}
    Here $\Delta X_t$'s are the optimal point obtained from (\ref{eq:singleopt}). Substitute (\ref{eq:addconstants}) to (\ref{eq:sumover})
    \begin{equation}
        \sum_{t=1}^T l_t(\Delta X_t) = \sum_{t=1}^T\rho_{a_t}(X_{a_t,t-1}+\Delta X_t) - \rho_{a_t}(X_{a_t,t-1}) + \sum_{t=1}^Tc_t(\Delta X_t).
    \end{equation}
    Note that at time $t$ for any agent $n\neq a_t$ it makes no trade $\Delta X_{n,t}=0$, and so
    \begin{equation}
        \rho_n(X_{n,t-1} + \Delta X_{n,t}) - \rho_n(X_{n,t-1}) = 0, \quad \forall n\neq a_t.
    \end{equation}
    The first summation on RHS thus becomes
    \begin{align}\label{eq:riskcancel}
        &\sum_{t=1}^T\rho_{a_t}(X_{a_t,t-1}+\Delta X_t) - \rho_{a_t}(X_{a_t,t-1}) \notag \\
            =& \sum_{t=1}^T\sum_{n\in A}\rho_n(X_{n,t-1}+\Delta X_{n,t}) - \rho_n(X_{n,t-1}) \notag \\
            =& \sum_{n\in A}\rho_n(X_{n,T}) - \rho_n(X_{n,0}).
    \end{align}
    Since the pricing rule is path-independent, the second summation on RHS is
    \begin{equation}\label{eq:pricingrulecancel}
        \sum_{t=1}^Tc_t(\Delta X_t) = \sum_{t=1}^Tc_t(Y_t) - c_t(Y_{t-1}) = c(Y_t) - c(0),
    \end{equation}
    where $Y_t = \sum_{\tau=1}^t\Delta X_\tau$ and $Y_0 = 0$. Since $X_{n,0} = 0$ and for any $t$ and $n\neq a_t$ $\Delta X_{n,t} = 0$, we have
    \begin{align}\label{eq:inventory}
        Y_t &= \sum_{\tau=1}^{t} \Delta X_\tau = \sum_{\tau=1}^t \Delta X_{a_\tau,\tau} = \sum_{\tau=1}^t\sum_{n\in A}\Delta X_{n,\tau} \notag \\
            &= \sum_{n\in A}\sum_{\tau=1}^t\Delta X_{n,\tau} = \sum_{n\in A} X_{n,t},   \quad \forall t>0.
    \end{align}
    Finally, substitute (\ref{eq:riskcancel}) (\ref{eq:pricingrulecancel}) and (\ref{eq:inventory}) to (\ref{eq:sumover}) and merge the rest terms we can end up with
    \begin{equation}
        \min_{\{\Delta X_t\}} L_T = \min_{\{\Delta X_t\}} c(Y_T) + \sum_{n\in A}\rho_n(X_{n,T}) - c(0),
    \end{equation}
    where $Y_T = \sum_{n\in A}\Delta X_{n,T}$. This is a sequential minimisation scheme for $\min L$. Finally, if the market converges at time $T$, we have $X_n = X_{n,T}$ and $Y = Y_T$, leading to at a local minimum of  $L$.
\end{proof}

Proposition \ref{thm:objective} is the key to understanding the market mechanism. Despite that the market is set up to let agents behave under their own preferences, the market mechanism ensures that a global objective is established, and that the agent will contribute to optimising the global objective at the same time as it optimise its own goal. The trading process thus provides a sensible algorithm for achieving this global objective.

\subsection{A primal-dual representation via convex analysis}
One concern is that (\ref{eq:objective}) is not commonly seen in machine learning problems\footnote{However, to complete our discussion, we show one example that uses (\ref{eq:objective}) in Section \ref{sec:examples}}. A different view of this objective should somehow be introduced. In fact, under mild requirements on the form of risk measures and pricing rules, the global objective forms the dual of the optimisation problem $\min_{P\in\mathcal{P}}\sum_n F_n(P)$. The requirement for the risk measures is convexity (\ref{eq:convexity}). The requirement for the pricing rules is that it is duality-based \citep{abernethy2013}.

\subsubsection{More on convex risk measures}
\citet{artzner1999} and \citet{follmer2002} show that a convex risk measure has a form
\begin{equation}\label{eq:dualrepresentation}
    \rho(X) = \sup_{Q\in\mathcal{P}} \left(\E\nolimits_Q[-X] - \alpha(Q) \right),
\end{equation}
where $\mathcal{P}$ is a set of probabilities on $(\Omega,\mathfrak{F})$ such that $Q$ is absolutely continuous w.r.t.\ $P$ and $\E_Q[X]$ is well defined. The risk measure decreases as $\E_Q[X]$ increases but this effect is penalised by a functional $\alpha$. (\ref{eq:dualrepresentation}) is in essence a Legendre-Fenchel transform with a slight change on signs \citep{boyd2004}.

\subsubsection{Duality-based pricing rules}
We keep following the idea of \citet{abernethy2013} and apply their duality-based pricing rules to our problem. The authors point out that duality-based pricing rules are well motivated as they meet some natural conditions such as no-arbitrage. A duality-based pricing rule is path-independent and has a form\footnote{\citet{abernethy2013} represent markets in securities $\{\xi_k\}$ and shares $\{s_k\}$. To be consistent with our framework we change the representation to assets $X$ (cf.\ Section \ref{sec:setup}).}
\begin{equation}\label{eq:dualitybasedpricing}
    c(X) \equiv \sup_{Q\in\mathcal{P}} \left(\E\nolimits_Q[X] - R(Q) \right) = R^*(X),
\end{equation}
where $R^*$ denotes the Legendre-Fenchel transform of $R$. Note that in their work $R$ is required to be convex, but this condition could be relaxed since for any $R$ we could define $R' \equiv (R^*)^* = c^*$ to replace $R$, as $R'$ is always convex (as it is a conjugate dual) and $c = (R')^* = R^*$.

\subsubsection{The primal problem}
Now we are ready to show
\begin{proposition}[The primal problem]\label{thm:duality}
    For a multi-period market which involves agents who use convex risk measures in (\ref{eq:dualrepresentation})and a duality-based pricing rule market maker in (\ref{eq:dualitybasedpricing}), its global objective is the dual of
    \begin{equation}\label{eq:primalrecall}
        \min_{P\in\mathcal{P}} \sum_{n=0}^N F_n(P),
    \end{equation}
    where $F_0$ and $F_n$ are functionals that share the same domain $\mathcal{P}$. Specifically, $F_0 = R$ in (\ref{eq:dualitybasedpricing}), and $F_n = \alpha_n$ where $\alpha_n$ is the penalty functional of agent $n$.
\end{proposition}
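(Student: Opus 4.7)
The plan is to substitute the two dual representations into $L$ and then interchange the resulting outer infimum over $\{X_n\}$ with the suprema that define the risk measures and the pricing rule. Using (\ref{eq:dualrepresentation}) and (\ref{eq:dualitybasedpricing}), I would write
\begin{equation*}
L = \sup_{Q_0\in\mathcal{P}}\bigl(\E\nolimits_{Q_0}[Y] - R(Q_0)\bigr) + \sum_{n\in A}\sup_{Q_n\in\mathcal{P}}\bigl(\E\nolimits_{Q_n}[-X_n]-\alpha_n(Q_n)\bigr),
\end{equation*}
then pull the independent suprema outside so that $L = \sup_{Q_0,\{Q_n\}} G(Q_0,\{Q_n\},\{X_n\})$ with
\begin{equation*}
G = \E\nolimits_{Q_0}\Bigl[\sum_{n}X_n\Bigr] - R(Q_0) + \sum_{n}\bigl(\E\nolimits_{Q_n}[-X_n]-\alpha_n(Q_n)\bigr) = -R(Q_0)-\sum_n\alpha_n(Q_n)+\sum_n\E\nolimits_{Q_0-Q_n}[X_n],
\end{equation*}
after using $Y=\sum_n X_n$ and grouping terms agent by agent.

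Next I would take $\inf_{\{X_n\}}$ and appeal to a Sion-type minimax theorem to swap it with $\sup_{Q_0,\{Q_n\}}$ (convex risk measures and the duality-based cost are both convex and lower semicontinuous in their arguments, giving the convex/concave structure needed). The key observation is that the only dependence of $G$ on $X_n$ is through the linear term $\E_{Q_0-Q_n}[X_n]$; therefore $\inf_{X_n\in\mathcal{X}}\E_{Q_0-Q_n}[X_n]$ equals $0$ if $\E_{Q_0}[\xi_k]=\E_{Q_n}[\xi_k]$ for every security $\xi_k$ spanning $\mathcal{X}$, and equals $-\infty$ otherwise. So the inner infimum collapses the feasible set of the dual variables to those tuples satisfying the consistency condition $\E_{Q_0}[\xiB]=\E_{Q_n}[\xiB]$ for all $n$; on $\mathrm{span}(\xiB)$ this is simply $Q_0=Q_1=\cdots=Q_N=:Q$ (or one restricts $\mathcal{P}$ to probabilities indistinguishable on $\mathcal{X}$, as is standard in this literature).

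Restricted to that diagonal, $G$ reduces to $-R(Q)-\sum_n\alpha_n(Q)$, so
\begin{equation*}
\inf_{\{X_n\}} L \;=\; \sup_{Q\in\mathcal{P}}\Bigl(-R(Q)-\sum_{n\in A}\alpha_n(Q)\Bigr) \;=\; -\min_{P\in\mathcal{P}}\Bigl(R(P)+\sum_{n\in A}\alpha_n(P)\Bigr),
\end{equation*}
which, identifying $F_0=R$ and $F_n=\alpha_n$, is exactly (up to sign) the primal problem (\ref{eq:primalrecall}). Thus the market's global objective $L$ is a Lagrangian-style dual whose optimal value equals $-\min_P\sum_{n=0}^N F_n(P)$, establishing the claimed primal-dual correspondence.

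The main obstacle is the minimax exchange and the handling of the linear ``consistency'' term: one has to verify enough convexity, lower semicontinuity, and a suitable compactness (or coercivity) condition on $\mathcal{P}$ to invoke Sion's theorem, and to argue cleanly that the inner infimum over the unbounded linear functional $X_n\mapsto\E_{Q_0-Q_n}[X_n]$ produces the effective constraint $Q_0=Q_n$ on $\mathcal{X}$. Everything else is algebraic rearrangement of the two Legendre-Fenchel representations already established in (\ref{eq:dualrepresentation}) and (\ref{eq:dualitybasedpricing}).
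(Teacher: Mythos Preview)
Your argument is correct, but it takes a different route from the paper. The paper's proof is a two-line application of the \emph{generalised Fenchel duality} of \citet{shalev-shwartz2007}: it simply quotes that the dual of $\min_P\sum_{n=0}^N F_n(P)$ is $-\min_{\{X_n\}}\bigl(F_0^*(Y)+\sum_n F_n^*(-X_n)\bigr)$ with $Y=\sum_n X_n$, and then observes from (\ref{eq:dualrepresentation}) and (\ref{eq:dualitybasedpricing}) that $\rho_n(X)=F_n^*(-X)$ and $c=F_0^*$, which identifies the dual with the global objective $L$ of (\ref{eq:objective}). No minimax swap or consistency argument is written out; all of that is absorbed into the cited duality theorem.

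What you do instead is re-derive that Fenchel duality from first principles: you unfold the two conjugate representations inside $L$, interchange $\inf_{\{X_n\}}$ with the joint $\sup$ via a Sion-type theorem, and let the unbounded linear terms $\E_{Q_0-Q_n}[X_n]$ enforce the diagonal constraint $Q_0=Q_1=\cdots=Q_N$ on $\mathcal{X}$. This buys you transparency (one sees explicitly why a single $P$ emerges and where weak duality $-\min L\le\min_P\sum_n F_n$ comes from), at the cost of having to check the hypotheses for the minimax exchange and to be careful that ``$Q_0=Q_n$ on $\spn(\xiB)$'' is the right identification of dual variables. The paper's version is shorter and avoids those technicalities by outsourcing them to the cited result; yours is more self-contained and arguably more informative about the mechanism.
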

\begin{proof}
    We use the \emph{generalised Fenchel's duality} \citep{shalev-shwartz2007} to derive the Lagrange dual problem of (\ref{eq:primalrecall}). The generalised Fenchel's duality states that the dual of problem (\ref{eq:primalrecall}) is
    \begin{equation}\label{eq:dual}
        -\min_{\{X_n\}\in\mathcal{X}} F_0^*(Y) + \sum_{n=1}^NF_n^*(-X_n), \quad Y = \sum_{n=1}^NX_n,
    \end{equation}
    where $F_n^*$ denotes the Legendre-Fenchel transform.

    We construct the convex risk measure for each agent $n$. use (\ref{eq:dualrepresentation}) and choose $\alpha = F_n$
    \begin{equation}
        \rho_n(X) = \sup_{Q\in\mathcal{P}} \left(\E\nolimits_Q[-X] - F_n(Q) \right) = F_n^*(-X).
    \end{equation}
    For the pricing rule (\ref{eq:dualitybasedpricing}) we choose $R = F_0$ and obtain $c = F_0^*$. Substitute them back to the dual problem (\ref{eq:dual}) and we end up with
    \begin{equation}
        -\min_{\{X_n\}}L = -\min_{\{X_n\}\in\mathcal{X}} c(Y) + \sum_{n=1}^N\rho_n(X_n).
    \end{equation}
    This matches the global objective $L$ (cf. (\ref{eq:objective})) with a different sign. The negation sign is necessary because the Lagrange dual problem \emph{lower bounds} the primal one
    \begin{equation}\label{eq:lowerbound}
        -\min_{\{X_n\}}L \leq \min_{P\in\mathcal{P}} \sum_{n=0}^N F_n(P),
    \end{equation}
    and become exact when the strong duality holds \citep{boyd2004}.
\end{proof}
Proposition \ref{thm:duality} gives us two ways of building the connection between markets and machine learning:
\begin{inparaenum}[1)]
    \item If we model a market using our framework, we could then figure out the global objective of the market and then the primal problem, which can be solved using machine learning methods.
    \item More interestingly, given a machine learning problem of form (\ref{eq:primalrecall}), we could transform it to a market and solve the problem by running the market, during which we could take the advantage of some market properties, such as distributed environment and privacy, to gain extra benefits.
\end{inparaenum}

\section{Related Work}\label{sec:relatedwork}
The idea of building models for prediction markets and discussing their relation to optimisation is not novel, and a few significant progresses have been achieved in the past few years. We will discuss the work that is closely related to ours.

It is \citet{chen2010} who first show that what scoring rule market makers do are effectively online no-regret learning. Their study focuses on the market makers while agents are not directly modelled, which motivates a framework for the whole market.

\citet{storkey2011} defines and analyses a type of prediction markets based on definitions on the markets, securities, and agents. Agents are modelled by EUT, that is, an agent is rational by maximising its expected utility. By analysing the equilibrium status of the market the author shows that the market can aggregate beliefs from agents to output a probability distribution over the future events. The author does not discuss precise market mechanisms or give the global objective of the market, which makes it difficult to link these markets to optimisation procedures.

Another important progress is given by \citet{penna2012}, who apply the market scoring rules as the market mechanism to the framework of \citet{storkey2011}. The work shows that with a large population of agents whose portfolios are drawn from a demand distribution, the whole market implements \emph{stochastic mirror descent}. One concern is that they suggest using EUT to model agents but they do not use it to solve the optimal portfolios for the agents. This problem is partially solved by \citet{premachandra2013}, who derives the solution for a certain type of expected utilities. A similar setting is also studied by \citet{sethi2013}. They focus more on the convergence of the market dynamics, and show how markets can aggregate beliefs by using numerical evidences.

\subsection{Risk measures and EUT}
Here we justify the choice of risk measures as the agent decision rules. First, the output value of a risk measure can be treated as a risk-free asset and standard linear operations are well defined for it. In comparison, an expected utility outputs a number that only has abstract meaning, i.e.\ to measure the degree of agent's satisfaction. Additionally, risk measures force translation invariance by definition, but expected utility functions do not have this property in general. With the help of translation invariance, the wealth $w$ can always be separated from the risky asset $X$, which implies that the optimal portfolio does not depend on $w$. This saves us from the trouble of associating $w$ with the aggregation weights, as the relationship between them is highly inconsistent and varies dramatically under different utilities \citep{storkey2012}. Finally, we could always derive convex a risk measure $\rho_u$ from any expected utility \citep{follmer2004}
\begin{equation}
    \rho_u(X) \equiv \inf\{m\in\Real \mid \E\nolimits_P[u(X+m)]\geq u_0 \},
\end{equation}
where $P$ is the personal belief of the agent. In fact, the output of this risk measure is the \emph{risk premium}, the least amount of money that one would like to borrow in order to accept this risky asset. Then a sensible decision rule should be to find an asset that minimise the premium, which leads to our decision rule. 

As an example, consider the HARA utility
\begin{equation}
    u_H(x) = \frac{1-\gamma}{\gamma}\left(\frac{ax}{1-\gamma} + b\right)^\gamma, \quad a>0, \frac{ax}{1-\gamma} + b>0.
\end{equation}
The resultant convex risk measure is the one who has the following penalty functional
\begin{equation}\label{eq:harapenalty}
    \alpha(Q) = \frac{\gamma}{a}\eta^{-1/\eta}(-u_0)^{1/\gamma}\E\left[\left(\frac{\dif Q}{\dif P} \right)^\eta \right]^{1/\eta} + (1-\gamma)\frac{b}{a},
\end{equation}
where $1/\eta+1/\gamma = 1$. A special case of HARA is given by $b=1$ and $\gamma\to-\infty$, which leads to the exponential utility function $u_E(x)=-\exp(-ax)$. It is easy to check that the risk measure associated with exponential utility $u_E$ is exactly the entropic risk measure in (\ref{eq:entropic}) with $\theta=a$ \citep{follmer2002}.

\section{Examples}\label{sec:examples}
In this section we use three examples to illustrate the connections between the multi-period trading markets and machine learning.

\subsection{Opinion Pooling}
The opinion pooling problem is a common setting for prediction market models \citep{barbu2012,storkey2012}. \citet{garg2004} show that the objective of an opinion pool is to minimise a weighted sum of a set of divergences. Particularly, for logarithmic opinion pool the objective is to
\begin{equation}\label{eq:logop}
    \min_{P\in\mathcal{P}} \sum_nw_n D[P||P_n],
\end{equation}
where $D[\cdot||\cdot]$ is the KL-divergence and $\{w_n\}$ are weight parameters.

Now consider an log-opinion pool of a set of $A$ probabilities on a finite discrete sample space $\Omega$ with $K$ states. To set up a market that matches the log-opinion pool, we first define a market on the same space $\Omega$ and introduce $K$ Arrow-Debreu securities. We introduce $N$ agents, and assign a unique probability $P_n\in A$ to agent $n$ as its personal belief. According to (\ref{eq:entropic}), agent $n$'s risk measure has the form
\begin{equation}
    \rho_n(\sB_n) = \frac{1}{\theta_n}\log\sum_{k=1}^K p_ke^{-\theta_ns_{n,k}},
\end{equation}
where we let $\theta_n$ match the weight $w_n$ by $\theta_n^{-1} = w_n$. For the sake of simplicity, we choose a logarithmic market scoring rule market maker \citep{hanson2007}
\begin{equation}
    c(\sB_0) = \frac{1}{\theta_0}\log\sum_{k=1}^K e^{\theta_0s_{0,k}}.
\end{equation}
The market can be run by using Algorithm \ref{alg:multiple}. Two typical simulation results are shown in Figure \ref{fig:converged} and \ref{fig:stochastic}. The primal problem of this market is (applying Proposition \ref{thm:objective} and \ref{thm:duality})
\begin{equation}\label{eq:lopprimal}
    \min_{P\in\mathcal{P}} \frac{1}{\theta_0}D[P||P_0] + \sum_{n\in A} \frac{1}{\theta_n}D[P||P_n],
\end{equation}
where the domain $\mathcal{P} = \Delta_K$ is the probability simplex in $K$ dimensions and $P_0 = \mathrm{uniform}(K)$ is the discrete uniform distribution in $\Delta_K$. In this case the optimal $P$ can be analytically solved. Recall that $\theta_n^{-1} = w_n$ and we have
\begin{equation}\label{eq:biased}
    P\propto \prod_{n\in A}P_n^{w_n/(\theta_0^{-1} + \sum_{n\in A}w_n)}.
\end{equation}
Since we introduce the market maker, the aggregated belief $P$ is not a pure weighted product of agents' beliefs, but with a bias towards $P_0$. However, when the population is sufficiently large such that $\sum_n\theta_n^{-1}\gg \theta_0^{-1}$, the effect of the market maker could be ignored and we will end up with a pure aggregation of agent beliefs \citep{penna2012}.

\begin{figure}
    \centering
    \includegraphics[scale=.5]{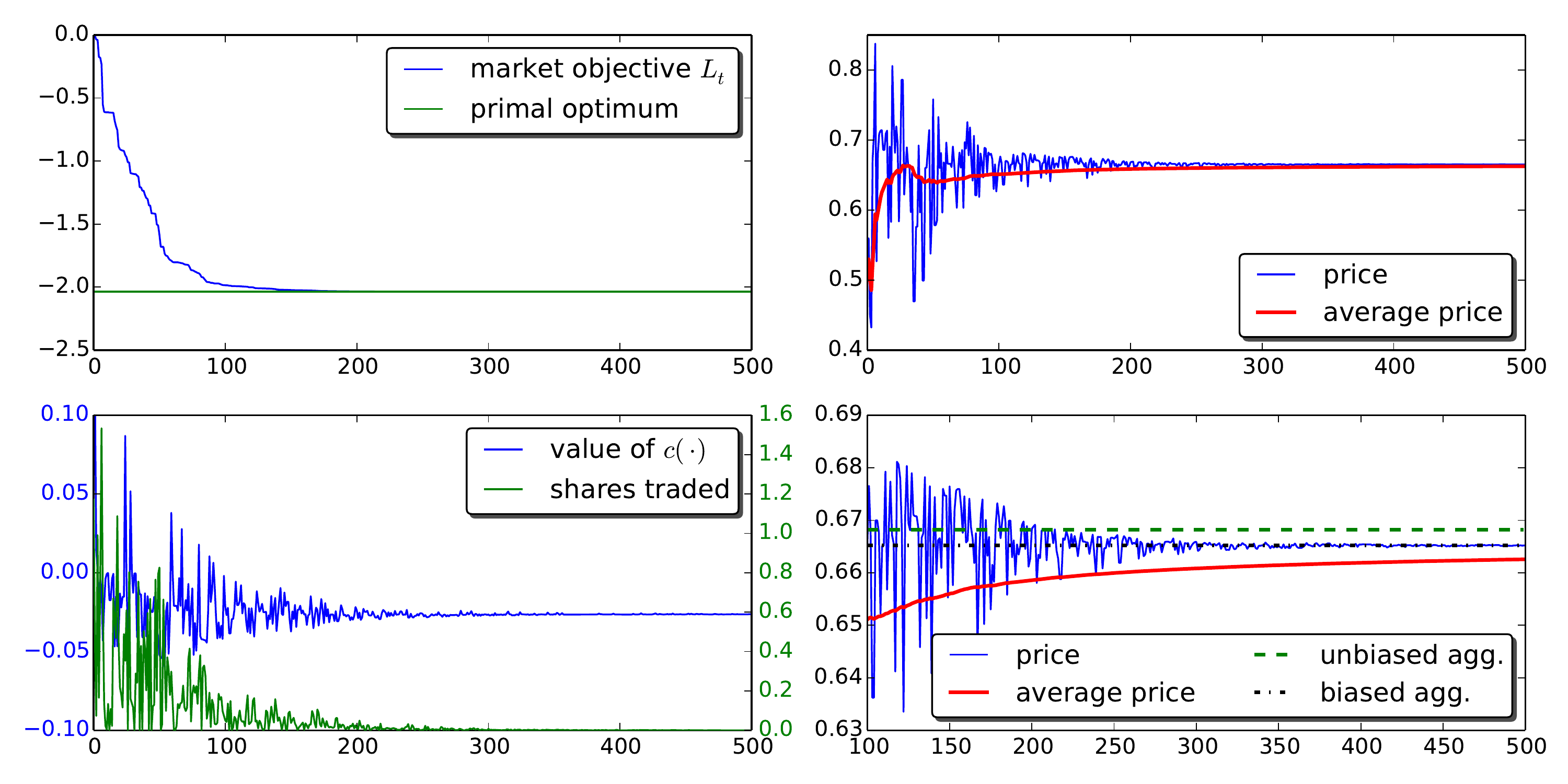}
    \caption{A market with Arrow-Debreu securities defined on a binary event $\omega$. The experiment setting is similar to the biased coin market \citep{penna2012}, in that all agents start with a uniform prior on $\omega$ and each one builds its own posterior belief after a private observation of $5$ samples of $\omega$. The difference is here only a finite number $N=10$ of agents are involved. After $t=300$ the global objective of the market converges to the negation of (\ref{eq:lopprimal}) (upper left, cf.\ (\ref{eq:lowerbound})), while the market price converges to a position which is close to the unbiased aggregation, but with a small bias towards $0.5$ due to the biased uniform belief of the market maker (upper and lower right, cf.\ (\ref{eq:biased})).}
    \label{fig:converged}
\end{figure}

\begin{figure}
    \centering
    \includegraphics[scale=.5]{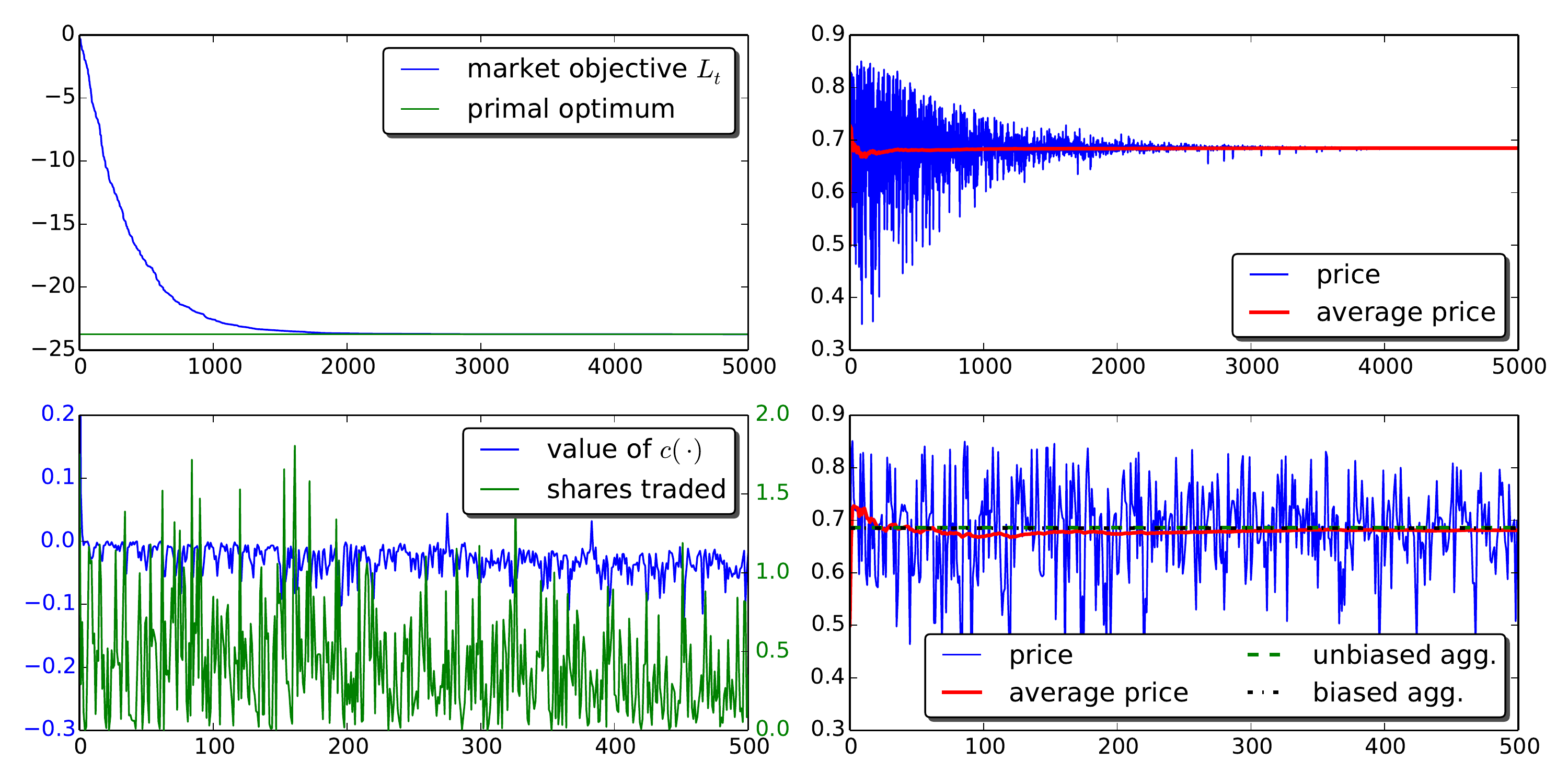}
    \caption{A market which shares the setting in Figure \ref{fig:converged} but with $N=100$ agents. After increasing the population the convergence becomes much slower (upper left and right). The market price does not show a sign of convergence before $t=500$ (lower left and right). Comparatively, the average price quickly converges to the aggregated belief. With more agents involved, the market maker loses its weight in the aggregation, leading to an aggregated belief that is closer to the unbiased one. This is expected, as for a large population the market should reproduce the results of the biased coin market \citep{penna2012}.}
    \label{fig:stochastic}
\end{figure}

\subsection{Bayesian Update}
We give our second example by first setting up a market and then match a machine learning problem to the market. Let's build a market on a continuous sample space $\Omega=\Real$. We only define one security $\xi(\omega) = \omega$, and so the asset $X=s\omega$. We introduce only one agent. Again, the agent is characterised by an entropic risk measures, with coefficient $\theta_1$ and $P_1=\mathcal{N}(\mu_1,\sigma_1^2)$ is the normal distribution. The moment-generating function in (\ref{eq:entropic}) is
\begin{equation}
    M_X(-\theta_1) = \E\nolimits_{P_1}[e^{-\theta_1s\omega}] = e^{-\theta_1s\mu_1+\sigma_1^2\theta_1^2s^2/2},
\end{equation}
and so the risk measure is $\rho_1(s) = -s\mu_1 + \sigma_1^2\theta_1s^2/2$. For the market maker we use the quadratic market scoring rule $c(s)=\theta_0s^2/2$. Now we could run this market using Algorithm \ref{alg:multiple} with only one agent.

It could be shown that this market implements a Bayesian \emph{maximum a posteriori} (MAP) update for the Gaussian, in which the prior is provided by the market maker and the likelihood information is provided by the agent.

Consider the setting of estimating a univariate Gaussian $\mathcal{N}(\mu,\sigma_1)$. All we need is the sufficient statistics calculated from a set of $N$ data points $\mathcal{D} = \{x_1,x_2,\ldots,x_N\}$. For clarity of exposition let's assume that we only care about the Bayesian updates of the mean parameter $\mu$, and think $\sigma_1$ is a prefixed constant. Introduce a Conjugate prior on the mean
\begin{equation}
    p(\mu\mid \mu_0,\sigma_0) \propto \exp\left(-\frac{1}{\theta_0}\frac{(\mu-\mu_0)^2}{2\sigma_0^2}\right),
\end{equation}
where $\theta_0^{-1}$ is so-called the pseudo count. The posterior is
\begin{align}
    p(\mu\mid \mathcal{D}&,\mu_0,\sigma_0) \propto p(\mu\mid \mu_0,\sigma_0) p(\mathcal{D}\mid\mu,\sigma_1) \notag \\
        &\propto \exp\left(-\frac{1}{\theta_0}\frac{(\mu-\mu_0)^2}{2\sigma_0^2}\right)\exp\left(-N\frac{(\mu-\bar{x})^2}{2\sigma_1^2}\right) \notag \\
        &\propto \exp\left(-\frac{1}{\theta_0}\frac{(\mu-\mu_0)^2}{2\sigma_0^2}-\frac{1}{\theta_1}\frac{(\mu-\mu_1)}{2\sigma_1^2} \right),
\end{align}
where $\mu_1=\bar{x}$ denotes the sample mean of the data set, and $\theta_1=N^{-1}$. If our goal is to calculate the MAP distribution then we have an optimisation problem
\begin{equation}
    L = \min_{\mu\in\Real} \frac{1}{\theta_0}\frac{(\mu-\mu_0)^2}{2\sigma_0^2} + \frac{1}{\theta_1}\frac{(\mu-\mu_1)^2}{2\sigma_1^2}.
\end{equation}
Let
\begin{equation}
    F_0(\mu) \equiv \frac{1}{\theta_0}\frac{(\mu-\mu_0)^2}{2\sigma_0^2}, \quad F_1(\mu) \equiv \frac{1}{\theta_1}\frac{(\mu-\mu_1)^2}{2\sigma_1^2},
\end{equation}
and thus we have $L = \min_{\mu\in\Real} F_0(\mu) + F_1(\mu)$. Since $F_0$ and $F_1$ are convex, we could apply the Fenchel's duality to the problem $L$, which gives us the following dual problem
\begin{equation}
    -L' = \min_{s\in\Real} F_0^*(s) + F_1^*(-s),
\end{equation}
where $F_0^*$ is the Legendre-Fenchel transform of $F_0$
\begin{equation}
    F_0^*(s) = \sup_{\mu\in\Real} s\mu - F_0(s) = s\mu_0 + \frac{1}{2}\sigma_0^2\theta_0s^2,
\end{equation}
and similarly $F_1^*(s) = s\mu_1 + \frac{1}{2}\sigma_1^2\theta_1s^2$. Choose the hyperparameter $\mu_0=0, \sigma_0=1$, and we finally have
\begin{equation}
    -L' = \min_{s\in\Real} \frac{\theta_0}{2}s^2 + \left( -s\mu_1 + \frac{1}{2}\sigma_1^2\theta_1s^2 \right) = \min_{s\in\Real} c(x) + \rho_1(s).
\end{equation}
This is exactly the agent's objective. Since $s$ and $\mu$ are dual to each other, the market performs the Bayesian update (MAP estimate) in the dual space of the mean parameters.

\subsection{Logistic Regression}
In the third example we discuss a classic machine learning problem. Given a data set $\mathcal{D} = \{\{\xB_m,\yB_m\}\mid \xB_m\in\Real^K,\yB_m = \{+1,-1\}, m=1,\ldots,M\}$, we would like to build logistic regression model with $l_2$-regularisation. The objective is
\begin{equation}\label{eq:logistic}
    L = \min_{\wB\in\Real^K}\frac{1}{M}\sum_{m=1}^M\log\left(1+e^{y_m(\wB\cdot\xB_m)}\right) + \frac{\lambda}{2}\|\wB\|^2,
\end{equation}
where $\|\cdot\|$ is the $l_2$ norm.

To convert this problem to a market we use (\ref{eq:objective}) and Proposition \ref{thm:objective}. Let the sample space be the space that generates the data $\Omega\equiv\Real^K\cup\{+1,-1\}$ and each future state is associated with a data in $\Omega$, $\omega=\{\xB,y\}$. Define $K$ securities, each of which is $\xi_k(\omega)=yx_k$. We introduce $N=K$ agents, such that the agent $n=k$ is only interested in trading in the $k$-th security $\xi_k$. Thus the shares of security $k$ held by agent $n$ is $s_{n,k} = \ind(n=k)w_k$, and the asset is $X_n = \sB_n\cdot\xiB = w_n\xi_n$. The market inventory is $\sB_0 = \sum_n\sB_n = \wB$. Let $c(\wB)$ be the first term on the RHS of (\ref{eq:logistic}) and define the risk measure of agent $n$ as $\rho_n(\sB_n) = \lambda\sB_n^2/2$. We end up with
\begin{equation}
    L = \min_{\wB} c(\wB) + \sum_{k=1}^K\frac{\lambda}{2}w_k^2 = \min_{\{\sB_n\}} c(\sB_0) + \sum_{n=1}^N\rho_n(\sB_n).
\end{equation}
Now the market is ready to run under Algorithm \ref{alg:multiple}. In order to show a slightly deeper connection to a specific learning method, we notice that the objective of agent $n$ at each round is $\min_{\Delta w_{k,t}}c(\wB_{t-1}+\Delta w_{k,t})+(w_{k,t-1}+\Delta w_{k,t})^2/2$. As the solution to this is not analytic, it is costly to solve for the exactly minimum of this objective at each time. To get rid of this problem, we could relax the condition that agents behaviour is rationally optimal, and let the agents accept a portfolio as long as it is better than its current position $\rho_n(\hat{\sB}_{n,t}) < \rho_n(\hat{\sB}_{n,t-1})$. Specifically agents can take steps towards the optimal solution. This can be achieved by the following portfolio updating rule
\begin{equation}
    \Delta w_{k,t} = -a \left.\frac{\dif}{\dif w_k}\left(c(\wB) + \frac{\lambda}{2}w_k^2 \right) \right|_{\wB=\wB_{t-1}},
\end{equation}
where $a>0$ is adjusted such that $\rho_n(\hat{\sB}_{n,t}) < \rho_n(\hat{\sB}_{n,t-1})$. In practice $a$ could be chosen by backtracking line search \citep{boyd2004}. The market we designed above effectively implements a \emph{coordinate descent} algorithm \citep{luo1992}.

Note that, instead of introducing $N=K$ agents, we can match the logistic regression problem by using only one agent and allowing it to trade all securities. This will result in a standard gradient descent method.

\section{Conclusion}
This paper establishes and discusses a new model for prediction markets. We use risk measures instead of expected utility to model agents, which results in an analytical market framework. We show that our market as a whole optimises certain global objective through its market dynamics. Based on this result, we make intimate connections between machine learning and markets.

One future work would be conducting a detailed analysis of this framework using the tools of convex optimisation \citep{boyd2004}. A particularly interesting topic is to find the conditions under which the market will converge. As we have observed, the stochasticity comes in when a large population of agents are involved, which is believed to be the nature of any real market \citep{penna2012}.

\section*{Acknowledgement}
This work was supported by Microsoft Research Cambridge through its PhD Scholarship Programme.

\bibliographystyle{apalike}
\bibliography{multiperiod}

\end{document}